\newtheorem{proposition}{Proposition}
\newtheorem{lemma}[proposition]{Lemma}
\newtheorem{corollary}[proposition]{Corollary}
\newtheorem{theorem}[proposition]{Theorem}
\newcommand\req[1]{(\ref{#1})}
\newcommand{\qed}{\hbox{\rule{6pt}{6pt}}}
\newenvironment{proof}[1][]{\paragraph{Proof{#1}}}{\hfill\qed\medskip\\}
\newcommand\drop[1]{}
\def\showlabel#1{}
\def\showfiglabel#1{}
\def\junk#1{}
\def\journal#1{}
\newcommand\packlist{\setlength{\itemsep}{1pt}
\setlength{\parskip}{0pt}\setlength{\parsep}{0pt}}
\newcounter{invari}
\newenvironment{invariants}{\begin{enumerate}\packlist\setcounter{enumi}{\value{invari}}
\renewcommand\theenumi{(\roman{enumi})}
\renewcommand\labelenumi\theenumi}{\setcounter{invari}{\value{enumi}}\end{enumerate}}
\newcommand\cutcolor{\textnormal{\sf cut-or-color}}
\newcommand\wt{\widetilde}
\newcommand\tO{{\widetilde O}}
\newcommand\tOmega{{\widetilde\Omega}}
\newcommand\tTheta{{\widetilde\Theta}}
\newcommand\ds{\Delta}
\newcommand\dt{\delta}
\newcommand\nh{\Psi}
\newcommand\tcite[1]{~\cite{#1}}
\newcommand\tref[1]{~\ref{#1}}
\begin{document}
\setcounter{page}0

\title{Combinatorial coloring of 3-colorable graphs}

\author{{\em Ken-ichi Kawarabayashi}\thanks{Ken-ichi Kawarabayashi's
    research is partly supported by Japan Society for the Promotion of
    Science, Grant-in-Aid for Scientific Research, by C \& C
    Foundation, by Kayamori Foundation and by Inoue Research Award for
    Young Scientists.}\\National Institute of Informatics, Tokyo, Japan\\
  \texttt{k\_keniti@nii.ac.jp}
  \and {\em Mikkel Thorup}\\
  University of Copenhagen and AT\&T Labs---Research\\
  \texttt{mikkel2thorup@gmail.com}}

\date{\ \vspace{-1cm}}

\maketitle

\begin{abstract}
We consider the problem of coloring a 3-colorable graph
in polynomial time using as few colors as possible.
We present a combinatorial algorithm getting
down to $\tO(n^{4/11})$ colors. This
is the first combinatorial improvement of
Blum's $\tO(n^{3/8})$ bound from FOCS'90.  Like Blum's algorithm,
our new algorithm composes nicely with recent semi-definite
approaches. The current best bound is
$O(n^{0.2072})$ colors
by Chlamtac from FOCS'07. We now bring it
down to $O(n^{0.2038})$ colors.
\end{abstract}


\thispagestyle{empty}
\clearpage

\section{Introduction}
If ever you want to illustrate the difference between what we consider
hard and easy to someone not from computer science, use the example of
2-coloring versus 3-coloring: suppose there is too much fighting in a
class, and you want to split it so that no enemies end up in the same
group. First you try with a red and a blue group. Put someone in the
red group, and everyone he dislikes in the blue group, everyone they
dislike in the red group, and so forth. This is an easy systematic
approach. Digging a bit deeper, if something goes wrong, you have an
odd cycle, and it is easy to see that if you have a necklace with an
odd number of red and blue beads, then the colors cannot alternate
perfectly. Knowing that red and blue do not suffice, we might try
introducing green, but this is already beyond what we believe
computers can do.

Three-coloring is a classic NP-hard problem. It was
proved hard by Garey, Johnson, and Stockmeyer at
STOC'74 \cite{GJS76}, and was the prime example of NP-hardness
mentioned by Karp in 1975 \cite{Karp75}. It
is an obvious target for any approach to NP-hard
problems. With the approximation approach, given a 3-colorable graph,
we try to color it in polynomial time using as few colors as
possible. This challenge has engaged many researchers. At STOC'82,
Wigderson \cite{Wig83} got down to $O(n^{1/2})$ colors for a graph
with $n$ vertices. Berger and Rompel \cite{BR90} improved this
to $O((n/(\log n))^{1/2})$. Blum \cite{Blum94} came with the first
polynomial improvements, first to $\tO(n^{2/5})$ colors at STOC'89,
and then to $\wt O(n^{3/8})$ colors at FOCS'90.

The next big step at FOCS'94 was by Karger, Motwani, Sudan \cite{KMS98}
who used semi-definite programming (SDP). For
a graph with maximal degree $\Delta_{\max}$, they
got down to $O(\Delta_{\max}^{1/3})$ colors. Combining
it with Wigderson's algorithm, they got down to $O(n^{1/4})$ colors.
Later Blum and Karger \cite{BK97} combined the SDP \cite{KMS98}
with Blum's \cite{Blum94} algorithm, yielding
an improved bound of $\wt O(n^{3/14}) =\tO(n^{0.2142})$ (actually we
can get $n^{0.2142}+2$ colors since we have rounded the exponent up). Later
improvements in semi-definite programming have also
been combined with Blum's algorithm. At STOC'06, Arora, Chlamtac, and
Charikar \cite{ACC06} got down to $O(n^{0.2111})$ colors. The proof in \cite{ACC06} is based on
the seminal result of Arora, Rao and Vazirani \cite{ARV09} which gives an $O(\sqrt{\log n})$ algorithm for the sparsest cut problem. The last
improvement was at FOCS'07 by Chlamtac \cite{Chl07} who
got down to $O(n^{0.2072})$ colors.

Only a few lower bounds are known. The
strongest known hardness result shows that it is NP-hard to get
down to $5$ colors \cite{VGK04,KLS00}.
Recently, Dinur, Mossel and Regev \cite{DMR09}
showed that it's hard to color with any constant number of colors (i.e.,
$O(1)$ colors) based on a variant of Unique Games Conjecture.

Some integrality gap results \cite{FLS02,KMS98,Sze94} show that the
simple SDP relaxation has integrality gap at least $n^{0.157}$.  It
is therefore natural to go back and see if we can improve things
combinatorially.

In this paper, we present the first improvement on the combinatorial
side since Blum in 1990 \cite{Blum94}. With a purely
combinatorial approach, we get
down to $\wt O(n^{4/11})$ colors. Combining it with Chlamtac's
SDP \cite{Chl07}, we get down to $O(n^{0.2038})$ colors..

\paragraph{Techniques}
In the details we reuse a lot of the techniques pioneered by Blum
\cite{Blum94}, but our overall strategy is more structural. We will be
looking for sparse cuts that we can recurse over. When no more sparse
cuts can be found (and if we are not done by other means), we will
have crystallized a vertex set $X$ that is guaranteed to be
monochromatic in some 3-coloring.  The vertices in $X$ can then be
identified. We note for comparison that one of the main technical
lemmas in Blum \cite{Blum94} (see Lemma \ref{lem:blum} below) is a test to
see if a sufficiently large set $Y$ is multichromatic in the
sense that no color can dominate too much in any coloring. The
monochromatic set $X$ that we identify is typically much too small for
Blum's lemma to apply.

Below we focus on our combinatorial algorithm. The integration
with SDP is essentially explained in \cite{BK97} and is sketched
in the end.

\section{Preliminaries including ingredients from Blum}\label{sec:blum}
We are given a 3-colorable graph $G=(V,E)$ with $n=|V|$ vertices.
For a vertex set $X \subseteq V$, let $N(X)$ be the neighborhood of $X$
and  $G|X$ be the subgraph induced by $X$.
The unknown 3-coloring is with red, green, and blue.
If we say a vertex set \emph{$X$ is green}, we mean that every
vertex in $X$ is colored by green.
\drop{Given two vertex sets $S, T \subset V$ with $S \cap T =\emptyset$, the \emph{degrees} in $T$ (or from $T$ to $S$) mean
degrees of all the vertices in $T$ to $S$.   }

For some color target $k$
which is polynomial in $n$, we wish to find
a $\tO(k)$ coloring of $G$ in polynomial time.  We are going to reuse several ideas and techniques from
Blum's approach\tcite{Blum94}.
\paragraph{Progress}
Blum has a general notion of {\em progress towards $\tO(k)$ coloring}
(or {\em progress} for short if $k$ is understood) which is defined
such that if we for any 3-colorable graph can make progress towards
$\tO(k)$ coloring in polynomial time, then we can $\tO(k)$ color
any 3-colorable graph in polynomial time.

Blum defines several types of progress, but here the only concrete
type of progress we need to know is that of {\em monochromatic progress\/}
where we identify a set
of vertices that is monochromatic in some 3-coloring, hence which can
be identified in a single vertex. Otherwise we will only make progress
via results of Blum presented below using a common parameter
\begin{equation}\label{eq:nh}
\nh = n/k^2.
\end{equation}
A very useful tool we get from Blum is the
following multichromatic test:
\begin{lemma}[{\cite[Corollary 4]{Blum94}}] \label{lem:blum} Given a vertex set $X\subseteq V$ of
size at least $\nh = n/k^2$, in polynomial time, we can either make
progress towards $\tO(k)$-coloring of $G$, or else guarantee that
under {\em every} legal 3-coloring of $G$, the set $X$ is
multichromatic.
\end{lemma}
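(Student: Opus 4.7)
My plan is to exploit the structural rigidity forced on the neighborhood of any would-be near-monochromatic subset of $X$. Suppose for contradiction that some legal 3-coloring makes $X$ near-monochromatic, say with a large uniformly-green subset $X_g\subseteq X$. Then $X_g$ is independent in $G$, every vertex in $N(X_g)$ is colored red or blue, and consequently $G[N(X_g)]$ is bipartite, with its 2-coloring agreeing with the red/blue split. My goal is either to detect this structure and convert it into one of Blum's progress types, or to certify that it cannot exist for any dominant color — which is precisely the multichromatic guarantee.

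The algorithm I would run is a three-step propagation. First, compute the connected components of a bipartite subgraph of $G[N(X)]$ obtained by deleting any vertex or edge of $N(X)$ that is ruled out (for example, a triangle on three vertices each forcing a distinct color). Each such component has two candidate red/blue labelings, which serve as hypotheses about the red/blue partition of $N(X_g)$. Second, propagate: for each $v\in X$ that has neighbors on both sides of the candidate partition, $v$ cannot be red or blue and is therefore forced to be green, yielding a candidate green subset $X'\subseteq X$. Third, verify: if $X'$ is an independent set comprising a large enough fraction of $X$, it is a \emph{monochromatic progress} witness and we halt; if $X'$ contains an edge the candidate labeling is infeasible and we discard it; and if every surviving candidate yields an $X'$ that is too small, the hypothesis of a dominant green color is refuted. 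We then repeat the argument symmetrically for red and blue.

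The main obstacle — and what I expect to be the combinatorial heart of the lemma — is bounding the search over candidate labelings by a polynomial in $n$, since a priori each bipartite component offers an independent binary choice and there may be super-polynomially many components. This is where the hypothesis $|X|\ge\nh = n/k^2$ is essential: because the dominant subset $X_g$ must comprise nearly all of $X$, in each bipartite component the ``majority'' side (the one hit by more would-be-green vertices of $X$) is essentially forced, collapsing the enumeration to a canonical labeling up to polynomially many local corrections. A careful amortization should show that each failed candidate either eliminates a distinct potential $X_g$ or deposits another form of Blum progress (e.g.\ a high-degree vertex whose 2-colorable neighborhood permits a Wigderson-style 2-coloring via the partition just computed). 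If the bounded search terminates with no progress produced, then no legal 3-coloring can concentrate $X$ in a single color, and we have certified that $X$ is multichromatic under every legal 3-coloring of $G$.
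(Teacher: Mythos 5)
This lemma is not proved in the paper at all: it is imported verbatim from Blum (his Corollary~4), so there is no in-paper argument to compare against, and your proposal must stand on its own as a proof of Blum's result. It does not. The first gap is at the very start of your algorithm: the bipartite structure you want to exploit lives on $G[N(X_g)]$ for the \emph{unknown} near-monochromatic subset $X_g$, not on $G[N(X)]$. The latter need not be bipartite, and there is no canonical polynomial-time way to ``delete what is ruled out'' to make it so --- removing a minimum set of vertices or edges to destroy all odd cycles is itself NP-hard, and different hypothetical colorings of $G$ induce different (and incomparable) bipartite subgraphs of $N(X)$. So the objects whose components you propose to enumerate are not well defined.

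The second and decisive gap is the one you flag yourself: the per-component binary choices multiply to a super-polynomial search, and your proposed collapse (``the majority side is essentially forced'') is asserted, not proved. There is no majority argument available --- the green vertices of $X$ may distribute their neighbors across both sides of a component in a perfectly balanced way, and nothing ties the hypothesis $|X|\ge n/k^2$ to any such forcing; indeed the threshold $n/k^2$ never enters your argument quantitatively. Finally, even granting a polynomial enumeration, declaring ``monochromatic progress'' on a set $X'$ deduced from an unverified candidate labeling is unsound: monochromatic progress requires a \emph{guarantee} that the set is monochromatic in some legal 3-coloring, and a wrong hypothesis yields a set with no such guarantee (merging it can destroy 3-colorability). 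Blum's actual proof of this corollary goes by an entirely different, counting-based route, built on the bound of Lemma~\ref{lem:small-nh} on common neighborhoods --- which is precisely where the size threshold $n/k^2$ does real work --- together with his other progress types (e.g., identifying two vertices that must receive the same color in every legal coloring). If you want to prove this lemma rather than cite it, that is the machinery to reconstruct.
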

In fact Blum has a stronger lemma \cite[Lemma 12]{Blum94}
guaranteeing not only that $X$ is multichromatic, but
that no single color is used by more than a fraction
$(1-1/(4\log n))$ of the vertices in $X$. This stronger 
version is not needed here. Using Lemma \ref{lem:blum} he proves:
\begin{lemma}[{\cite[Theorem 3]{Blum94}}]\label{lem:small-nh} If two vertices
have more then $\nh$ common neighbors, we can make
progress towards $O(k)$ coloring. Hence we can assume that
no two vertices have more than $\nh$ common neighbors.
\end{lemma}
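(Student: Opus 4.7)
The plan is to apply Blum's multichromatic test (Lemma~\ref{lem:blum}) to the common neighborhood and then run a short case analysis on how $u$ and $v$ themselves could be colored.

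Let $u,v$ be two vertices with $|N(u)\cap N(v)|>\nh$, and set $X=N(u)\cap N(v)$. Since $|X|>\nh=n/k^2$, Lemma~\ref{lem:blum} applies to $X$. Either it already returns progress towards $\tO(k)$-coloring --- in which case we are done --- or it certifies that in every legal 3-coloring of $G$ the set $X$ is multichromatic, i.e.\ at least two colors appear on $X$. We assume the latter.

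Now I would exploit the fact that every vertex of $X$ is simultaneously adjacent to $u$ and to $v$. In any fixed legal 3-coloring, the color of each $x\in X$ must differ from both the color of $u$ and the color of $v$. If $u$ and $v$ received \emph{distinct} colors, only the third color would remain available to every vertex of $X$, forcing $X$ to be monochromatic. This contradicts the conclusion from Lemma~\ref{lem:blum}. Hence in every legal 3-coloring of $G$ one must have that $u$ and $v$ get the same color.

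So the two-element set $\{u,v\}$ is monochromatic in every 3-coloring (and in particular in some 3-coloring), which by definition is monochromatic progress: we identify $u$ and $v$ into a single vertex and recurse on the resulting graph, which is still 3-colorable. Thus whenever some pair has more than $\nh$ common neighbors we make progress, and so we may henceforth assume that no two vertices share more than $\nh$ common neighbors, as claimed.

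There is no real obstacle to this proof; the only thing to be careful about is invoking the multichromatic test \emph{before} arguing about the colors of $u$ and $v$, since the pigeonhole step on the three available colors only yields a contradiction once multichromaticity of $X$ has been guaranteed.
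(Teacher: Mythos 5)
Your proof is correct and is exactly the argument the paper attributes to Blum: apply the multichromatic test (Lemma~\ref{lem:blum}) to the common neighborhood $N(u)\cap N(v)$, observe that distinct colors on $u$ and $v$ would force that set onto the single remaining color, and conclude that $\{u,v\}$ is monochromatic in every legal 3-coloring, yielding monochromatic progress by identification. The paper itself only cites \cite[Theorem 3]{Blum94} and notes it follows from Lemma~\ref{lem:blum}; you have filled in the details correctly.
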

Using this bound on joint neighborhoods, Blum proves the following
lemma (which he never states in this general quotable form):
\begin{lemma}\label{lem:large-neighborhoods}
If the vertices in a set $Z$ on the average have $d$ neighbors in $U$, then
the whole set $Z$ has at least $\min\{d/\nh,|Z|\}\,d/2$
distinct neighbors in $U$.
\end{lemma}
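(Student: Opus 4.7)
The plan is to bound the number of distinct neighbors using a double-counting argument that combines Cauchy--Schwarz with the common-neighborhood bound from Lemma~\ref{lem:small-nh}. Let $N = N(Z) \cap U$ denote the set of distinct neighbors of $Z$ in $U$, and for each $u \in N$ write $d_u = |N(u) \cap Z|$. The hypothesis on average degrees says $\sum_{u \in N} d_u = d|Z|$.

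The first step is a lower bound on the second moment via Cauchy--Schwarz:
$$\sum_{u\in N} d_u^2 \;\ge\; \frac{\left(\sum_{u\in N} d_u\right)^2}{|N|} \;=\; \frac{d^2|Z|^2}{|N|}.$$
The second step is an upper bound from Lemma~\ref{lem:small-nh}: every pair of distinct vertices in $Z$ has at most $\nh$ common neighbors in $U$, so counting triples $(u, \{z_1,z_2\})$ with $z_1,z_2\in N(u)\cap Z$ gives
$$\sum_{u\in N}\binom{d_u}{2} \;\le\; \binom{|Z|}{2}\,\nh \;\le\; \frac{|Z|^2\,\nh}{2}.$$

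Combining the two via $\sum d_u^2 = 2\sum\binom{d_u}{2} + \sum d_u$ yields
$$\frac{d^2|Z|^2}{|N|} \;\le\; |Z|^2\,\nh + d|Z|,$$
and rearranging gives the clean bound
$$|N| \;\ge\; \frac{d^2|Z|}{|Z|\,\nh + d}.$$
Now I split on which term in the denominator dominates. If $d \le |Z|\,\nh$, i.e.\ $d/\nh \le |Z|$, then the denominator is at most $2|Z|\,\nh$ and we obtain $|N| \ge d^2/(2\nh) = (d/\nh)\cdot d/2$. Otherwise $d > |Z|\,\nh$ and the denominator is at most $2d$, giving $|N| \ge |Z|\cdot d/2$. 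In both cases $|N| \ge \min\{d/\nh,\,|Z|\}\cdot d/2$, as required.

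The only subtle point is making sure Lemma~\ref{lem:small-nh} is legitimately applicable to pairs inside $Z$ when counting common neighbors restricted to $U$: any common neighbor in $U$ is in particular a common neighbor in $G$, so the $\nh$ upper bound still applies. Beyond that the argument is essentially a textbook second-moment/incidence bound, so I do not expect serious obstacles; the main content is the observation that Lemma~\ref{lem:small-nh} gives exactly the Kruskal--Katona-style upper bound on $\sum\binom{d_u}{2}$ needed to match the Cauchy--Schwarz lower bound on $\sum d_u^2$.
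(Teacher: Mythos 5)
Your proof is correct, but it takes a genuinely different route from the paper. The paper argues from the $Z$-side and greedily: it first disposes of the trivial case $d/\nh\le 2$, restricts to the $d/\nh$ vertices of $Z$ with most neighbors in $U$, orders them by decreasing degree, and observes via Lemma~\ref{lem:small-nh} that the $i$-th vertex contributes at least $d_i-(i-1)\nh$ \emph{new} neighbors, summing to more than $|Z|d/2$. You instead count from the $U$-side: Cauchy--Schwarz gives the lower bound $\sum_u d_u^2\ge d^2|Z|^2/|N|$, while Lemma~\ref{lem:small-nh} caps $\sum_u\binom{d_u}{2}$ by $\binom{|Z|}{2}\nh$, and the two combine into the closed-form bound $|N|\ge d^2|Z|/(|Z|\nh+d)$, from which the stated minimum follows by a two-case estimate on the denominator. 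Both arguments hinge on exactly the same ingredient (the codegree bound of Lemma~\ref{lem:small-nh}), so neither is more general in substance; what your version buys is that it works directly with the average degree without sub-sampling or ordering, avoids the paper's trivial-case check and its slightly loose indexing, and produces a cleaner intermediate inequality. The paper's version is more elementary (no second-moment machinery) and makes the ``each new vertex adds many fresh neighbors'' intuition explicit, which is the form reused informally elsewhere in the analysis. Your final remark that common neighbors in $U$ are in particular common neighbors in $G$ is the right sanity check for applying Lemma~\ref{lem:small-nh}.
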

\begin{proof}
If $d/\nh\leq 2$, the result is trivial, so $d/\nh\geq 2$.
It suffices to prove the lemma for $|Z|\leq d/\nh$, for
if $Z$ is larger, we restrict our attention to the $d/\nh$ vertices
with most neighbors in $U$. Let the vertices in $Z$ be ordered
by decreasing degree into $Z$. Let $d_i$ be the degree of
vertex $v_i$ into $U$. We now study how the neighborhood of $Z$ in $U$  
grows as we include the vertices $v_i$. When we add $v_i$,
we know from Lemma\tref{lem:small-nh} that its joint neighborhood with any
(previous) vertex $v_h$, $h<i$ is of size at most $\nh$. It follows that
$v_i$ adds at least $d_i-(i-1)\nh$ new neighbors in $U$, so
$|N(Z)\cap U|\geq \sum_{i=0}^{|Z|-1} (d_i-(i-1)\nh)>|Z|d/2$.
\end{proof}

\paragraph{Second neighborhood structure}
Let $\Delta_{\min}$ be the smallest degree in graph $G$. With color
target $k$, we can trivially assume $\Delta_{\min}> k$ (but
$\Delta_{\min}$ may be much larger if we apply SDP to low degree
vertices). For some $\Delta_0=\tOmega(\Delta_{\min})$, Blum
\cite[Theorems 7 and 8 and the Proof of Theorem 5]{Blum94}
identifies in polynomial time a subgraph $H_0$ of $G$ that consists of
a vertex $r_0$ and two disjoint vertex sets $S_0,T_0$, with the following
3-level structure:
\begin{itemize}\packlist
\item A root vertex $r_0$. We assume $r_0$ is colored red in any
3-coloring.
\item A first neighborhood $S_0\subseteq N_G(r_0)$ of size at least $\ds_0$.
\item A second neighborhood $T_0\subseteq N_G(S_0)$ of size at most $n/k$.
\item All edges in $H_0$ go between $r_0$ and $S_0$ and between $S_0$ and $T_0$.
\item The vertices in $S_0$ have average degree $\ds_0$ into $T_0$.
\item The degrees from $T_0$ to $S_0$ are all within a factor
$(1\pm o(1))$ around an average
$\dt_0\geq \ds_0^2\, k/n$.
\end{itemize}
Note that \cite[Theorems 7 and 8]{Blum94} does not have this size
bound on $T_0$. Instead there is a large set $R$ of red vertices
leading to a large identifiable independent set constituting a
constant fraction of $T_0$. If this set is of size $\tOmega(n/k)$,
then Blum makes progress towards a $\tO(k)$ coloring as described
in \cite[Proof of Theorem 5]{Blum94}, and we are done.
Assuming that this did not happen, we have the size bound on $T_0$.

Blum seeks progress directly in the above structure, but we are going
to apply a series of reductions which either make progress, find
a good cut recursing on one side, or identify a monochromatic set. This is
why we already now used the subscript $_0$ to indicate the original
structure provided by Blum \cite{Blum94}.

\section{Our coloring algorithm}
We will use Blum's second neighborhood structure $H_0$
with a color target
\begin{equation}\label{eq:def-k}
k=\tTheta\left((n/\ds_{\min})^{4/7}\right).
\end{equation}
We are going to work on induced subproblems
$(S,T)\subseteq (S_0,T_0)$ defined in terms of a subsets $S\subseteq
S_0$ and $T\subseteq T_0$. The edges considered in the subproblem
are exactly those from $H_0$ between $S$ and $T$. This
edge set is denoted $E_0(S,T)$.
With $r_0$ red in any 3-coloring, we know that all
vertices in $S\subseteq S_0$ are blue or green.
We will define {\em high degrees\/} in $T$ (to $S$) as degrees bigger than
$\dt_0/16$ (almost a factor 16 below the average in $T_0$), and we will make sure that any subproblem $(S,T)$ considered
satisfies:
\begin{invariants}
\item\label{inv:high-degrees} We have more than $\nh$ vertices of high degree
in $T$.
\end{invariants}
\paragraph{Cut-or-color}
We will implement a subroutine $\cutcolor(t,S,T)$ which for a problem
$(S,T)\subseteq (S_0,T_0)$ takes starting point in an arbitrary high degree vertex $t\in
T$. It will have one of the following outcomes:
\begin{itemize}\packlist
\item Reporting a ``sparse cut around a subproblem $(S',T')\subseteq(S,T)$''
with no cut edges
between $S'$ and $T\setminus T'$ and only few cut edges
between $T'$ and $S\setminus S'$. The exact definition of
a sparse cut is complicated, but at this point, all we need to
know is that $\cutcolor$ may declare a sparse cut.
\item Some progress toward $k$-coloring with Blum's Lemma \ref{lem:blum}.
\item A guarantee that if $r$ and $t$ have the different colors in
some 3-coloring $C_t$ of $G$, then $S$ is monochromatic in $C_t$.
\end{itemize}
We note that testing whether or not $S$ can be monochromatic is only
new if $|S|<\nh$, for if $|S|\geq\nh$ and $S$ was
monochromatic, we would get immediate progress with Lemma
\ref{lem:blum}.

\paragraph{Recursing towards a monochromatic set}
Using $\cutcolor$, we now describe our main recursive algorithm which
takes as input a subproblem $(S,T)$.  Let $U$ be the set of high
degree vertices in $T$. By \ref{inv:high-degrees} we have
$|U|\geq\nh$, so we can apply Blum's multicolor test from Lemma
\ref{lem:blum}.  Assuming we did not make progress, we know that $U$ is
multichromatic in every valid 3-coloring.  We now apply $\cutcolor$ to
each $t\in U$, stopping only if a sparse cut is found or progress is
made.  If we make progress, we are done, so assume that this does not
happen.
\paragraph{Monochromatic case}
The most interesting case is if we get neither progress nor a sparse cut.
\begin{lemma}\label{lem:mono} If $\cutcolor$ does not find progress
or a sparse cut for any high degree $t\in U$, then
$S$ is monochromatic in some 3-coloring of $G$.
\end{lemma}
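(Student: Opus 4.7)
}
The plan is to combine Blum's multichromatic guarantee on $U$ with the third (monochromatic) outcome of $\cutcolor$, which by the hypothesis of the lemma is what we obtain for every $t\in U$.

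First I would fix an arbitrary legal 3-coloring $C$ of $G$; one exists since $G$ is 3-colorable. By invariant~(i) we have $|U|\geq\nh$, so Lemma~\ref{lem:blum} applies to $U$. Since the application to $U$ did not produce progress (we are assuming no progress was made), its alternative conclusion must hold: $U$ is multichromatic in \emph{every} legal 3-coloring of $G$, and in particular in the fixed $C$. Now recall that $r_0$ is red in $C$ while $U$ uses at least two colors in $C$; since there are only three colors in total, some color appearing in $U$ must differ from red, so there exists $t\in U$ whose color in $C$ is not red.

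For this particular $t$, the hypothesis that $\cutcolor(t,S,T)$ returned neither a sparse cut nor progress means the only remaining outcome applies: in any 3-coloring of $G$ in which $r_0$ and $t$ have different colors, $S$ is monochromatic. Applying this guarantee to $C$ (where $r_0$ is red but $t$ is not) shows that $S$ is monochromatic in $C$, which proves the lemma. I do not expect a real obstacle here; the argument is essentially a one-line pigeonhole combination of the two black-box guarantees, the only subtlety being that the multichromatic test on $U$ must be invoked \emph{before} one tries to infer anything from the per-$t$ outcomes of $\cutcolor$, since otherwise one has no way of producing a $t\in U$ whose color differs from $r_0$'s.
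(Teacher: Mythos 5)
Your proposal is correct and matches the paper's own proof: both apply Blum's multichromatic guarantee to $U$ to extract, in a given 3-coloring, some $t\in U$ colored differently from $r_0$, and then invoke the third outcome of $\cutcolor(t,S,T)$ to conclude that $S$ is monochromatic in that coloring. Your write-up is just a more explicit version of the paper's one-line argument.
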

\begin{proof}
Since $U$ is multichromatic in any $3$-coloring, there
is a $3$-coloring $C_t$ where some $t\in U$ has
a different color than $r_0$, and then $\cutcolor$ guarantees
that $S$ is monochromatic in $C_t$.
\end{proof}
Thus we have found a monochromatic set, so monochromatic progress can
be made.
\paragraph{Sparse cut}
If a sparse cut around a subproblem $(S',T')$ is reported, we
recurse on $(S',T')$.

\section{Implementing $\cutcolor$}
We are now going to implement $\cutcolor$. The first part of it
is essentially the coloring that Blum \cite[\S 5.2]{Blum94}
uses for dense regions. We shall describe how we bypass the
limits of his approach as soon as we have presented his
part.

Assume that $t \in U$ and
$r$ have different colors in some coloring $C_t$; otherwise the
algorithm provides no monochromatic guarantee. Let us say that $r_0$ is red and $t$
is green.

Let $X$ be the neighborhood of $t$ in $S$ and let $Y$ be the
neighborhood of $X$ in $T$. As in \cite{Blum94} we note that all of
$X$ must be blue, and that no vertex in $Y$ can be blue. We are going
to expand $X\subseteq S$ and $Y\subseteq T$ preserving the following
invariant:
\begin{invariants}
\item\label{inv:XY} If $t$ was green and $r_0$ was red, then $X$
  is all blue and $Y$ has no blue.
\end{invariants}
If we end up with $X=S$, then \ref{inv:XY} implies that $S$ is
monochromatic in any 3-coloring where $r_0$ and $t$ have different colors.

\paragraph{$X$-extension}
Now consider any vertex $s\in S$ whose degree into $Y$ is at least $\nh$.
Using Lemma \ref{lem:blum} we can check that $N(s)\cap Y$ is
multichromatic. Since $Y$ has no blue, we conclude that
$s$ has both red and green neighbors, hence that $s$ is blue.
Note conversely that if $s$ was green, then all its neighbors in $Y$
would have to be red, and then the multichromatic test from
Lemma \ref{lem:blum} would have made progress.
Preserving \ref{inv:XY}, we now add the blue $s$ to $X$ and all neighbors of $s$ in $T$ to
$Y$.  We shall refer to this as an $X$-extension.

\paragraph{Relation to Blum's algorithm} Before continuing, let
us briefly relate to Blum's \cite{Blum94} algorithm.  The above
$X$-extension is essentially the coloring Blum \cite[\S 5.2]{Blum94}
uses for dense regions. He applies it directly to his structure $H_0$
from Section \ref{sec:blum}. He needs a larger degree $\dt_0\geq\ds_0^2k/n$
than we do, but then he proves that the set of vertices $s$ with degree
at least $\nh$ into $Y$ is more than $\nh$. This means that either he
finds progress with a green vertex $s$, or he ends up with a blue set
$X$ of size $\nh$, and gets progress applying Lemma \ref{lem:blum} to
$X$.

Our algorithm works for a smaller $\dt_0$ and thereby
for a smaller color target $k$. Our extended $X$ is typically
too small for Lemma \ref{lem:blum}. In fact, as we recurse,
we will get sets $S$ that themselves are much smaller than $\nh$. Otherwise
we would be done with Lemma \ref{lem:blum} if $S$ was monochromatic.

Below we introduce $Y$-extensions. They are similar in spirit to
$X$-extensions, and would not help us if we like Blum worked
directly with $H_0$. The important point will be that if we do not end
up with $X=S$, and if neither extension is possible, then we have identified
a sparse cut that we can use for recursion. We are thus borrowing from
Blum's proof \cite{Blum94} in the technical details, but the overall strategy, seeking sparse
cuts for recursion to crystallize a small monochromatic set $S$, is
entirely different (and new).

\paragraph{$Y$-extension}
We now describe a $Y$-extension, which is similar in spirit to the $X$-extension, but which
will cause more trouble in the analysis.  Consider a vertex $t'$ from
$T\setminus Y$. Let $X'$ be its neighborhood in $S$ (note that $X' \cap X=\emptyset$) and $Y'$ be the
neighborhood of $X'$ in $T$. Suppose $|Y\cap Y'|\geq\nh$. Using Lemma
\ref{lem:blum} we check that $Y\cap Y'$ is multichromatic. We claim
that $t'$ cannot be blue, for suppose it was. Then its neighborhood
would have no blue and $S$ is only blue and green, so $X'$
is all green.  Then $Y'$ has no green, but
$Y$ has no blue, so $Y'\cap Y$ would be all red, contradicting that
$Y\cap Y'$ is multichromatic. We conclude that $t'$ is not blue. Preserving \ref{inv:XY}, we now add $t'$ to $Y$.

\paragraph{Closure}
We are going to extend $X$ and $Y$ as long as possible. Suppose
we end up with $X=S$. With \ref{inv:XY} $\cutcolor$ declares
that $S$ is monochromatic in any
3-coloring where $r$ and $t$ have different
colors.

Otherwise we are in a situation where no $X$-extension nor $Y$-extension
is possible, and then
$\cutcolor$ will declare a sparse cut around
$(X,Y)$. A {\em sparse cut around $(X,Y)$\/} is simply
defined as being obtained this way. It has the following properties:
\begin{invariants}
\item\label{inv:high-start} The the original high degree
vertex $t$ has all its neighbors from $S$ in $X$, that is,
$N(t)\cap S\subseteq X$.
\item\label{inv:XtoT-Y} There are no edges between $X$ and $T\setminus Y$. To
see this, recall that when an $X$-extension adds $s'$ to $X$, it includes all
its neighbors in $Y$. The $Y$-extension does not change $X$.
\item\label{inv:X-extension} Each vertex $s'\in S\setminus X$ has $|N(s')\cap Y|< \nh$.
\item\label{inv:Y-extension} Each vertex  $t'\in T\setminus Y$ has $|N(N(t'))\cap Y|< \nh$.
\end{invariants}
A most important point here is that this characterization of a sparse cut
does not depend on the assumption that $t$ and $r$ have different
colors in some 3-coloring. It only assumes that $X$ and $Y$
cannot be extended further.

\section{Correctness} It should be noted that the correctness
of $\cutcolor$ follows from \ref{inv:XY} wich is immediate from
the construction. The technical difficulty that remains is to ensure
that we never end up considering a subproblem with too few
high-degree vertices for \ref{inv:high-degrees}, hence where
we cannot apply Lemma \ref{lem:blum} to ensure that the
high degree vertices do not all have the same color as $r_0$.

\section{Degree constraints}
Before we can start our recursive algorithm, we need
some slightly different degree constraints from those provided
by Blum \cite{Blum94} described in Section \ref{sec:blum}:
\begin{itemize}\packlist
\item The vertices in $S_0$ have average degree $\ds_0$ into $T_0$.
\item The degrees from $T_0$ to $S_0$ are all within a factor
$(1\pm o(1))$ around the average
$\dt_0\geq \ds_0^2\, k/n$.
\end{itemize}
We need some initial degree lower bounds, which are obtained simply by
removing low degree vertices creating our first induced subproblem $(S_1,T_1)\subseteq (S_0,T_0)$. Starting from $(S_1,T_1)=(S_0,T_0)$,
we repeatedly remove vertices from
$S_1$ with degree to $T_1$ below $\ds_0/4$ and vertices from $T_1$ with
degree to $S_1$ below $\dt_0/4$ until there are no such low-degree vertices left. The process eliminates less than
$|S_0|\ds_0/4+|T_0|\dt_0/4=|E_0(S_0,T_0)|/2$ edges, so half the edges
of $E_0(S_0,T_0)$ remain in $E_0(S_1,T_1)$. We also note that the
average degree in $T$ remains above $\dt_0/2=2\dt_1$. The point is
that the average on the $T$-side only goes down when we remove low
degree vertices from $S_0$, and that can take away at most $1/4$ of
the edges.
With $\ds_1=\ds_0/4$ and $\dt_1=\dt_0/4$, we
get
\begin{itemize}\packlist
\item The degrees from $S_1$ to $T_1$ are at least $\ds_1$.
\item The degrees from $T_1$ to $S_1$ are between $\dt_1$ and $(1+o(1))\dt_0<5\dt_1$,
with an average above $2\dt_1$.
Note that $\dt_1=\dt_0/4\geq \ds_0^2\, k/(4n)=4\ds_1^2 k/n$.
\end{itemize}
A {\em high degree\/} in $T$ can now be restated as a degree to
$S$ above $\dt_1/4=\dt_0/16=\ds_1^2 k/n$.
\begin{lemma}\label{lem:dt-nh}
$\ds_1=\nh n^{\Omega(1)}$.
\end{lemma}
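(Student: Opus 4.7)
The plan is to reduce the claim to a direct arithmetic check in the parameters $\ds_{\min}$, $k$, and $\nh$, and to finish using the standing assumption that the color target $k$ is polynomial in $n$. First I would unwind the definitions: by construction $\ds_1 = \ds_0/4 = \tOmega(\ds_{\min})$, while $\nh = n/k^2$ from \req{eq:nh}, so
\[
\frac{\ds_1}{\nh} \;=\; \tOmega\!\left(\frac{\ds_{\min}\, k^2}{n}\right).
\]

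Next I would substitute the defining formula $k = \tTheta((n/\ds_{\min})^{4/7})$ from \req{eq:def-k}, which gives $k^2 = \tTheta((n/\ds_{\min})^{8/7})$. After cancellation,
\[
\frac{\ds_1}{\nh} \;=\; \tOmega\!\left(\ds_{\min}\cdot\frac{(n/\ds_{\min})^{8/7}}{n}\right) \;=\; \tOmega\!\left(\left(\frac{n}{\ds_{\min}}\right)^{\!1/7}\right) \;=\; \tOmega(k^{1/4}),
\]
where the last equality uses the identity $n/\ds_{\min} = k^{7/4}$, itself immediate from the definition of $k$. At this point the lemma reduces to the assertion $k = n^{\Omega(1)}$.

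The main obstacle is justifying this last step, since the naive bound $\ds_{\min} \le n$ alone would only give $\ds_1/\nh = \tOmega(1)$ and the lemma would be false (e.g.\ at $\ds_{\min} = \Theta(n)$ one gets $k = O(1)$). Fortunately this is free in our setting: Section~\ref{sec:blum} explicitly commits to a color target $k$ that is polynomial in $n$ (as is required for a target like $\tO(n^{4/11})$ to be meaningful), so $k^{1/4} = n^{\Omega(1)}$. Equivalently, one reads the lemma under the standing assumption $\ds_{\min} \le n^{1-\Omega(1)}$; a regime where $\ds_{\min}$ is within polylogarithmic factors of $n$ forces $k$ to be subpolynomial and is handled outside the combinatorial engine entirely. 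Under that standing assumption, the chain of equalities above delivers $\ds_1 = \nh\, n^{\Omega(1)}$ as claimed.
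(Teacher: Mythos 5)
Your proposal is correct and follows essentially the same route as the paper: both reduce to $\ds_1/\nh=\tOmega(k^{1/4})$ via the definitions $\nh=n/k^2$ and $k=\tTheta((n/\ds_{\min})^{4/7})$, and then invoke the standing assumption that $k$ is polynomial in $n$ to conclude $k^{1/4}=n^{\Omega(1)}$. Your explicit remark that the final step genuinely needs $k=n^{\Omega(1)}$ (which the paper uses silently in its last equality) is a fair and accurate observation, not a deviation.
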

\begin{proof}
Recall that
$\ds_1=\ds_0/4=\tOmega(\Delta_{\min})$, so
from \req{eq:def-k} we get
$k=\tO\left((n/\ds_1)^{4/7}\right)$,
or equivalently, $\ds_1=\tOmega\left(n/k^{7/4}\right)$. Since
$\nh=n/k^2$, we conclude that
$\ds_1=\tOmega\left(\nh k^{1/4}\right)=\nh n^{\Omega(1)}$.
\end{proof}

\paragraph{Recursion}  Our recursion will start
from $(S,T)=(S_1,T_1)\subseteq (S_0,T_0)$. We will
ensure that each of subproblems $(S,T)$ considered satisfies
the following degree invariants:
\begin{invariants}
\item\label{inv:ds} Each vertex $v\in S$ has all its neighbors from $T_1$ in $T$, so the the degrees from $S$ to $T$ remain at least $\ds_1$.
This invariant follows immediately from sparse cut condition
\ref{inv:XtoT-Y}.
\item\label{inv:dt} The average degree in $T$ to $S$ is at least $\dt_1/2$.
\end{invariants}
The
following key lemma shows that the degree constraints imply that
we have enough high degree vertices in $T$ that we can
test if they are multichromatic with Lemma \ref{lem:blum}.
\begin{lemma}\label{lem:high-degrees} \ref{inv:ds} and \ref{inv:dt} imply
\ref{inv:high-degrees}, i.e., that we have more than $\nh$ vertices of
high degree in $T$.
\end{lemma}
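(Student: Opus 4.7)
The plan is to first lower-bound $|T|$ so that it is polynomially larger than $\nh$, and then double-count edges to see that a constant fraction of $T$ must have high degree. Invariants \ref{inv:ds} and \ref{inv:dt} alone do not preclude $|T|$ from being tiny, so the main step is to combine them with Blum's joint-neighborhood machinery.

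For the size bound, I would invoke Lemma \ref{lem:large-neighborhoods} with $Z=S$, $U=T$, and $d=\ds_1$. By \ref{inv:ds}, every vertex of $S$ has at least $\ds_1$ neighbors in $T$ (in $G$), so the average is at least $\ds_1$, and the lemma gives
\[
|T|\;\geq\;|N(S)\cap T|\;\geq\;\min\bigl\{\ds_1/\nh,\,|S|\bigr\}\cdot \ds_1/2.
\]
Since Lemma \ref{lem:dt-nh} yields $\ds_1\geq\nh$, the right-hand side is at least $\ds_1/2$ in either branch (we have $|S|\geq 1$ in any nontrivial instance). Plugging in $\ds_1=\nh\cdot n^{\Omega(1)}$ from Lemma \ref{lem:dt-nh}, we conclude $|T|=\omega(\nh)$, and in particular $|T|\geq 20\,\nh$ for large enough $n$.

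For the edge count, \ref{inv:dt} gives $|E_0(S,T)|\geq |T|\dt_1/2$, whereas the vertices of $T$ that are not of high degree contribute at most $|T|\dt_1/4$ edges (each has degree at most $\dt_1/4$ into $S$). Hence the high-degree set $U$ accounts for at least $|T|\dt_1/4$ edges. Since every vertex of $T\subseteq T_1$ has degree at most $5\dt_1$ into $S\subseteq S_1$ (the upper bound on $T_1$-to-$S_1$ degrees recorded just before the lemma), we obtain $|U|\geq (|T|\dt_1/4)/(5\dt_1)=|T|/20$, and with the $\omega(\nh)$ slack on $|T|$ this is strictly larger than $\nh$.

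The main obstacle is the lower bound on $|T|$; once that is in place, the remaining double-counting argument comparing the average degree $\dt_1/2$ against the high-degree threshold $\dt_1/4$ and the per-vertex cap $5\dt_1$ is routine.
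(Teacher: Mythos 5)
Your proof is correct and follows essentially the same route as the paper: a double count comparing the average degree $\dt_1/2$ from \ref{inv:dt} against the high-degree threshold $\dt_1/4$ and the per-vertex cap $5\dt_1$ shows a constant fraction of $T$ is high degree, and then $|T|=\ds_1/\nh$-type bounds from Lemma \ref{lem:dt-nh} give many more than $\nh$ such vertices. The only (harmless) detour is invoking Lemma \ref{lem:large-neighborhoods} to lower-bound $|T|$; the paper gets $|T|\geq\ds_1$ directly from \ref{inv:ds}, since a single vertex of $S$ already has $\ds_1$ neighbors, all of which lie in $T$.
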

\begin{proof}
If $h$ is the fraction of high degree vertices in $T$,
the average degree in $T$ is at most $h\,5\dt_1+(1-h)\dt_1/4$ which by
\ref{inv:dt} is at least $\dt_1/2$. Hence
$h=\Omega(1)$. By \ref{inv:ds} we have $|T|\geq\ds_1$,
so we have $\Omega(\ds_1)$ high degree vertices
in $T$. By Lemma \ref{lem:dt-nh} this is much more
than $\nh$ high degree vertices.
\end{proof}

\section{Maintaining degrees recursively}
All that remains is to prove that invariant \ref{inv:dt}
is preserved, i.e., that the average degree from $T$ to $S$ does
not drop below half the original minimum degree  $\delta_1$ from $T_1$ to
$S_1$. Inductively, when a sparse cut is declared around a new subproblem 
$(S',T')=(X,Y)\subseteq(S,T)$,
we can assume that \ref{inv:ds} and \ref{inv:dt} are satisfied for $(S,T)$ and
that  \ref{inv:ds} is satisfied for $(X,Y)$. It remains
to prove \ref{inv:dt}  for $(X,Y)$.

Below we first show that when a sparse cut is declared
around $(X,Y)\subseteq (S,T)$, then
$X$ cannot be too small. We later complement this by showing
that the total number of edges cut in the recursion cannot be too large.
\begin{lemma}\label{lem:large-T} $|Y|\geq \ds_1^2k^2/(8n)$.
\end{lemma}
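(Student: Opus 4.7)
The plan is to bound $|Y|$ from below by invoking Lemma~\ref{lem:large-neighborhoods} with $Z=X$ and $U=Y$. The two ingredients I need are a starting lower bound on $|X|$ and a guaranteed lower bound on the degree from $X$ into $Y$.

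For the lower bound on $|X|$, I would use invariant~\ref{inv:high-start}: the vertex $t$ from which $\cutcolor$ was launched is of high degree in $T$, meaning $|N(t)\cap S|>\dt_1/4\geq \ds_1^2 k/n$ (recalling that $\dt_1\geq 4\ds_1^2 k/n$), and \ref{inv:high-start} forces $N(t)\cap S\subseteq X$, so $|X|\geq \ds_1^2 k/n$. For the degree lower bound, invariant~\ref{inv:XtoT-Y} tells me that no edge from $X$ in the subproblem leaves $Y$, while invariant~\ref{inv:ds} tells me that every $x\in X\subseteq S$ has at least $\ds_1$ neighbors in $T_1\supseteq T$. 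Combining, every $x\in X$ has at least $\ds_1$ neighbors in $Y$.

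With (even minimum) degree $\ds_1$ from $X$ into $Y$, Lemma~\ref{lem:large-neighborhoods} applied to $Z=X$, $U=Y$, $d=\ds_1$ yields
\[
|Y|\;\geq\;|N(X)\cap Y|\;\geq\;\min\{\ds_1/\nh,\;|X|\}\cdot \ds_1/2.
\]
Since $\nh=n/k^2$, the first term inside the min evaluates to $\ds_1/\nh=\ds_1 k^2/n$, and this is exactly what produces the target bound $\ds_1^2 k^2/(8n)$. So the remaining step is to verify that the minimum is realized by the first term, i.e., $|X|\geq \ds_1/\nh$. Given $|X|\geq \ds_1^2 k/n$, this reduces to $\ds_1\geq k$, which follows from the choice of color target: by \req{eq:def-k} together with the computation in Lemma~\ref{lem:dt-nh}, $\ds_1=\tOmega(n/k^{7/4})$, which dominates $k$ in the regime $k\leq \tO(n^{4/11})$ targeted by the algorithm.

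The main obstacle I anticipate is bookkeeping around the polylog factors hidden in $\tOmega$: the constant slack between $\ds_1^2k^2/(2n)$, which Lemma~\ref{lem:large-neighborhoods} directly produces in the case where the minimum is the first term, and the claimed $\ds_1^2k^2/(8n)$ is precisely the room needed to absorb these polylog constants. Once that slack is confirmed sufficient, the argument closes with $|Y|\geq (\ds_1/\nh)(\ds_1/2)=\ds_1^2 k^2/(2n)\geq \ds_1^2 k^2/(8n)$.
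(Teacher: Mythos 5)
Your proof is correct and follows essentially the same route as the paper's: both lower-bound a subset of $X$ via the high-degree starting vertex $t$ and invariant \ref{inv:high-start}, then apply Lemma~\ref{lem:large-neighborhoods} with the degree guarantee from \ref{inv:ds} and \ref{inv:XtoT-Y}, checking that the minimum is attained by the $d/\nh$ term because $\ds_1\gg k$. The only cosmetic difference is that the paper applies the lemma to $Z=N(t)\cap S$ with the slightly conservative degree $d=\ds_1/2$ (hence the factor $1/8$), whereas you apply it to all of $X$ with $d=\ds_1$ and obtain the stronger $\ds_1^2k^2/(2n)$, which of course implies the stated bound.
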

\begin{proof}
If $t$ is the high degree vertex we started with in $T$, then
by\tref{inv:high-start}, we have the whole neighborhood $Z$
of $t$ in $S$ preserved in $X$. By definition of
high degree, $|Z|\geq \dt_1/4=\ds_1^2 k/n$. By
\ref{inv:ds} the degrees from $Z$ to $Y$ are at least $d=\ds_1/2$, so
$d/\nh=\ds_1/2\cdot k^2/n=o(|Z|)$.
Hence by Lemma\tref{lem:large-neighborhoods}, we have
$|N(Z)\cap Y|\geq d/\nh\cdot d/2=\ds_1^2/(8\nh)=\ds_1^2k^2/(8n)$.
\end{proof}
In our original problem $(S_1,T_1)$, each vertex $v\in Y$ we had
$\dt_1$ edges to $S_1$, so the original number of edges from $Y$ to $S_1$
was at least
\begin{equation}\label{eq:original-edges}
\dt_1\ds_1^2k^2/(8n).
\end{equation}
To prove \ref{inv:dt}, we argue that
at least half of these edges are between $Y$ and $X$.
This follows if we can prove that the total number of edges
cut is only half the number in \req{eq:original-edges}.

The following main technical lemma relates the number of new cut edges around
the subproblem $(X,Y)$ to the reduction $|T\setminus Y|$ in
the size of the $T$-side:
\begin{lemma}\label{lem:cut-loss}
The number of cut edges from $Y$ to $S\setminus X$ is bounded by
\begin{equation}\label{eq:loss}
|T\setminus Y|\,\frac{40\dt_1n^2}{\ds_1^2
  k^4}.
\end{equation}
\end{lemma}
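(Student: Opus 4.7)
The plan is to bound the cut edges $|E(Y, S\setminus X)|$ by counting length-two paths of the form $y$-$s$-$t'$ with $y\in Y$, $s\in S\setminus X$, and $t'\in T\setminus Y$. On one hand, the failure of any further $Y$-extension \ref{inv:Y-extension} sharply limits how many such paths any single $t'$ can anchor; on the other hand, the failure of $X$-extension \ref{inv:X-extension} combined with the degree lower bound \ref{inv:ds} forces each cut edge $ys$ to spawn at least $\ds_1 - \nh \geq \ds_1/2$ such paths, using $\ds_1 \gg \nh$ from Lemma \ref{lem:dt-nh}. The crux is to upgrade this ``per-edge'' production rate into a ``per-vertex-$y$'' inequality that can be summed cleanly via Lemma \ref{lem:large-neighborhoods}.

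Concretely, set $W = S \setminus X$, $U = T \setminus Y$, $Z_y := N(y) \cap W$ and $d_y := |Z_y|$, so that $|E(Y,W)| = \sum_y d_y$. Every $s \in Z_y$ has at least $\ds_1/2$ neighbors in $U$ (by \ref{inv:X-extension} and \ref{inv:ds}), so Lemma \ref{lem:large-neighborhoods} applied to $Z_y$ targeting $U$ gives $|N(Z_y)\cap U| \geq \min\{d_y,\ \ds_1/(2\nh)\}\cdot \ds_1/4$. Summing the left side over $y$ counts pairs $(t',y)\in U \times Y$ with $y \in N(N(t')\cap S)\cap Y$, so \ref{inv:Y-extension} bounds the sum by $|U|\nh$. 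Now split $Y$ at the threshold $d_y = \ds_1/(2\nh)$. For ``small'' $y$ the inequality gives $d_y \leq 4|N(Z_y)\cap U|/\ds_1$, contributing at most $4|U|\nh/\ds_1$ in aggregate. For ``large'' $y$ the inequality gives $|N(Z_y)\cap U| \geq \ds_1^2/(8\nh)$, so at most $8|U|\nh^2/\ds_1^2$ large $y$ exist; each contributes $d_y \leq 5\dt_1$ (the maximum degree from $T_1$ to $S_1$), summing to $40\dt_1\nh^2|U|/\ds_1^2 = 40\dt_1 n^2|U|/(\ds_1^2 k^4)$, which matches the target.

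The main obstacle is checking that the ``small'' contribution $4|U|\nh/\ds_1$ is dominated by the ``large'' contribution and can be absorbed into the constant $40$. That reduces to $\dt_1 \geq \ds_1/(10\nh)$, which follows from the degree lower bound $\dt_1 \geq 4\ds_1^2 k/n$ together with $\ds_1 \geq k/40$; the latter is guaranteed by the choice $k = \tTheta((n/\ds_{\min})^{4/7})$ (equivalently $k \leq \tO(n^{4/11})$) and the estimate $\ds_1 = \nh\, n^{\Omega(1)}$ from Lemma \ref{lem:dt-nh}. Note that a naive counting that bounds paths directly via Lemma \ref{lem:small-nh} on pairs $(t',y)$ without the threshold argument loses a factor $\ds_1/\nh = n^{\Omega(1)}$ and falls well short of the target, so the case split driven by Lemma \ref{lem:large-neighborhoods} is the technical heart of the argument.
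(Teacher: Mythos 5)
Your proof is correct and takes essentially the same route as the paper's: the paper likewise combines the double-counting bound $\sum_{t'\in T\setminus Y}|N(N(t'))\cap Y|\leq|T\setminus Y|\nh$ from \ref{inv:Y-extension}, the degree $\geq\ds_1/2$ from $S\setminus X$ into $T\setminus Y$ via \ref{inv:ds} and \ref{inv:X-extension}, Lemma~\ref{lem:large-neighborhoods}, and the maximum degree $5\dt_1$, only it packages your small/large case split as a single uniform per-$y$ inequality $|N(y)\setminus X|\leq\frac{40\dt_1 n}{\ds_1^2k^2}|N(N(y))\setminus Y|$, using the same hidden condition $\ds_1>k/40$ that you make explicit. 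One small care point: to land on the constant $40$ rather than $80$, your two cases must share the single budget $\sum_{y}|N(Z_y)\cap U|\leq|T\setminus Y|\nh$ instead of each being charged against it separately, which is precisely what your reduction to $\dt_1\geq\ds_1/(10\nh)$ (i.e., small rate $4/\ds_1$ at most large rate $40\dt_1\nh/\ds_1^2$) accomplishes.
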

\begin{proof}
First we note that from
\ref{inv:X-extension} we get a trivial bound of $\nh|S\setminus
X|$ on the number of new cut edges, but is not strong enough for
\req{eq:loss}. Here we use \ref{inv:Y-extension} we get
\begin{equation}\label{eq:double-sum}
\sum_{y\in Y} |N(N(y))\setminus Y|=
\sum_{t'\in T\setminus Y} |N(N(t'))\cap Y|\leq |T\setminus Y|\nh=
|T\setminus Y| n/k^2.
\end{equation}
We will now, for any $y\in Y$, relate
$|N(N(y))\setminus Y|$ to the number $|N(y)\setminus X|$ of edges cut
from $y$ to $S\setminus X$. Let $Z=N(y)\setminus X$.  By \ref{inv:XtoT-Y} we have that
$N(N(y))\setminus Y=N(Z)\setminus Y$. Consider
any vertex $v\in Z$. By \ref{inv:ds}
the degree from $v$ to $T$  is at least $\ds_1$. Since $v\not\in X$, by \ref{inv:X-extension}, the
degree from $v$ to $Y$ is at most $\nh$, and by Lemma \ref{lem:dt-nh},
$\nh=o(\ds_1)$. The degree from $v$ to $T\setminus Y$ is therefore at
least
$(1-o(1))\ds_1\geq\ds_1/2$. This holds for every $v\in Z$.
It follows by Lemma \ref{lem:large-neighborhoods}
that $|N(Z)\setminus Y|\geq \min\{(\ds_1/2)/\nh,|Z|\}\,\ds_1/4$. Relative
to $|Z|$, this is
\[\frac{|N(Z)\setminus Y|}{|Z|}\geq \min\left\{\frac{\ds_1/(2\nh)}{|Z|},
1\right\}\,\ds_1/4\]
From our original configuration $(S_1,T_1)$, we
know that all degrees in $T$ are bounded by $5\dt_1$
and this bounds the size of $Z=N(y)$.  Therefore
\[\frac{\ds_1/(2\nh)}{|Z|}\geq \frac{\ds_1 k^2/(2n)}{5\dt_1}=
\frac{\ds_1 k^2}{10\dt_1 n}.\]
Since $\dt_1\geq 4\ds_1^2k/n$, we have
\[\frac{\ds_1 k^2}{10\dt_1 n}
\leq \frac{\ds_1 k^2}{10(4\ds_1^2k/n)n}=
\frac{k}{40\ds_1}<1.\]
Therefore
\[\frac{|N(Z)\setminus Y|}{|Z|}\geq \min\left\{\frac{\ds_1/(2\nh)}{|Z|},
1\right\}\,\ds_1/4\geq \frac{\ds_1 k^2}{10\dt_1n}\,\ds_1/4
= \frac{\ds_1^2 k^2}{40\dt_1n}.\]
Recalling $Z=N(y)\setminus X$ and $N(Z)\setminus Y=N(N(y))\setminus Y$,
we rewrite the inequality as
\[|N(y)\setminus X|\leq \frac{40\dt_1n}{\ds_1^2 k^2}\, |N(N(y))\setminus Y|.\]
Using \req{eq:double-sum}, we now get the desired bound on
the number of cut edges from $Y$ to $S\setminus X$:
\[\sum_{y\in Y} |N(y)\setminus X|\leq
\frac{40\dt_1n}{\ds_1^2 k^2}\, \sum_{y\in Y} |N(N(y))\setminus
Y|=\frac{40\dt_1n}{\ds_1^2 k^2}\,|T\setminus Y|\, n/k^2=|T\setminus Y|\,\frac{40\dt_1n^2}{\ds_1^2
  k^4}.\]
\end{proof}
From Lemma\tref{lem:cut-loss} it immediately follows that the
total number of edges cut in the whole recursion is
at most
\begin{equation}\label{eq:total-loss}
|T_1|\,\frac{40\dt_1n^2}{\ds_1^2
  k^4}\leq \frac{40\dt_1n^3}{\ds_1^2
  k^5}.
\end{equation}
This should be at most half the
original number of edges from \req{eq:original-edges}. Thus we
maintain \ref{inv:ds} with an average degree of $\dt_1/2$ from $Y$ as
long as
\[\frac{40\dt_1n^3}{\ds_1^2
  k^5}\leq \frac{\dt_1\ds_1^2k^2}{16n}\]
or equivalently
\begin{equation}\label{eq:final}
k^7\geq 640\, (n/\ds_1)^4.
\end{equation}
Thus, if $k$ satisfies \req{eq:final}, then all our degree constraints
are maintained, which means that we will keep recursing over
sparse cuts until we either make progress towards a $\tO(k)$ coloring,
or end up with a provably monochromatic set $S$ on which we can make
monochromatic progress towards $\tO(k)$ coloring. Since
$\ds_1=\Omega(\ds_0)=\tOmega(\Delta_{\min})$, we can
pick $k$ as a function of $\Delta_{\min}$ and $n$
such that $k=\tO\left((n/\Delta_{\min})^{4/7}\right)$ and
such that \req{eq:final} will be satisfied. Thus we conclude
\begin{theorem}
\label{thm:main}
If a 3-colorable graph has minimum
degree $\Delta_{min}$, then we can make progress towards
$\tO\left((n/\Delta_{min})^{4/7}\right)$ coloring in polynomial time.
\end{theorem}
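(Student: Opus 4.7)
The plan is to assemble the pieces developed in the preceding sections into a single recursive procedure and verify that the invariants are preserved. First I would fix the color target $k=\tTheta((n/\Delta_{\min})^{4/7})$ as in \req{eq:def-k}. Since $\ds_1=\tOmega(\Delta_{\min})$ (Lemma \ref{lem:dt-nh} and the setup of the degree constraints), this choice satisfies the inequality \req{eq:final}, which is the quantitative heart of the argument. Then I would construct Blum's second neighborhood structure $H_0$ and the cleaned-up subproblem $(S_1,T_1)\subseteq(S_0,T_0)$ produced by the low-degree removal, so that the base case of invariants \ref{inv:ds} and \ref{inv:dt} holds for $(S_1,T_1)$.

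Next I would run the main recursive algorithm on $(S_1,T_1)$. Inductively, on any subproblem $(S,T)$ satisfying \ref{inv:ds} and \ref{inv:dt}, Lemma \ref{lem:high-degrees} gives \ref{inv:high-degrees}, so I may apply Lemma \ref{lem:blum} to the set $U$ of high-degree vertices in $T$. This either yields immediate progress towards $\tO(k)$-coloring, or certifies that $U$ is multichromatic in every 3-coloring. In the latter case I call $\cutcolor(t,S,T)$ for each $t\in U$: if any call returns progress, we are done; if none returns a sparse cut, Lemma \ref{lem:mono} delivers a monochromatic $S$ and hence monochromatic progress; otherwise we recurse on the reported sparse-cut subproblem $(X,Y)\subseteq(S,T)$.

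The delicate step is checking that the recursion never destroys the degree invariants. Invariant \ref{inv:ds} for $(X,Y)$ is immediate from the sparse cut property \ref{inv:XtoT-Y}, which guarantees that no edge from $X$ leaves $Y$. For \ref{inv:dt}, I would bound the edges lost globally by summing the per-step bound of Lemma \ref{lem:cut-loss}: since $\sum |T\setminus Y|\le |T_1|$, the total number of edges ever cut is at most the quantity in \req{eq:total-loss}. Comparing this with the original edge count \req{eq:original-edges} between any surviving $Y$ and $S_1$ reduces exactly to the inequality \req{eq:final}, so under our choice of $k$ at least half the original edges from $Y$ to $S_1$ remain inside the current $(S,T)$. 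This keeps the average $T$-side degree at least $\dt_1/2$, preserving \ref{inv:dt}.

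The main obstacle is precisely this global accounting, but Lemma \ref{lem:cut-loss} already provides the per-step bound in the right form, and combined with the choice of $k$ from \req{eq:def-k} everything telescopes cleanly. Polynomial running time follows because each recursive call strictly shrinks $|S|+|T|$ and performs only polynomially many $\cutcolor$ extensions and multichromatic tests. Thus the algorithm always terminates, and on termination it either has made Blum-style progress or identified a monochromatic $S$, giving monochromatic progress towards $\tO(k)=\tO((n/\Delta_{\min})^{4/7})$ coloring, which is the conclusion of the theorem.
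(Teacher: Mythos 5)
Your proposal follows essentially the same route as the paper: fix $k$ via \req{eq:def-k}, build $(S_1,T_1)$ from Blum's structure, recurse with $\cutcolor$ using Lemmas \ref{lem:high-degrees}, \ref{lem:blum} and \ref{lem:mono}, and preserve \ref{inv:dt} by telescoping Lemma \ref{lem:cut-loss} against \req{eq:original-edges} to arrive at \req{eq:final}. The argument is correct and matches the paper's own assembly of these ingredients.
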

Since we make trivial progress for vertices of degree below $k$,
we can assume $\Delta_{min}\geq k$, and hence we can balance with $k=\tTheta(n^{4/11})$
for a purely combinatorial algorithm to obtain the following corollary.
\begin{corollary}
\label{cor:main}
A 3-colorable graph on $n$ vertices can be colored
with $\tO\left(n^{4/11}\right)$ colors in polynomial time.
\end{corollary}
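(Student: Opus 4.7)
The plan is to derive the corollary from Theorem \ref{thm:main} by choosing $k$ to balance two sources of progress: the trivial handling of low-degree vertices, and the bound of Theorem \ref{thm:main}. Concretely, I would fix $k=\tTheta(n^{4/11})$ and show that for any 3-colorable graph on $n$ vertices we can make progress towards a $\tO(k)$ coloring in polynomial time. By Blum's framework (recalled at the start of Section \ref{sec:blum}), this suffices to produce a $\tO(k)=\tO(n^{4/11})$ coloring in polynomial time.

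First I would dispose of the low-degree case. If $G$ contains a vertex $v$ with $\deg(v)<k$, the standard peeling reduction applies: recurse on $G\setminus\{v\}$ and, given any $\tO(k)$ coloring of the remainder, extend it to $v$ by choosing any color outside the fewer than $k$ colors appearing on $N(v)$. Only one fresh color is ever needed per peeled vertex, so this indeed constitutes progress towards $\tO(k)$ coloring in Blum's sense. Iterating, we may assume $\Delta_{\min}\geq k$.

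Once $\Delta_{\min}\geq k$, Theorem \ref{thm:main} gives progress towards a $\tO((n/\Delta_{\min})^{4/7})$ coloring. The choice $k=\tTheta(n^{4/11})$ is the balance point: from $\Delta_{\min}\geq k=n^{4/11}$ we get
\[
\left(\frac{n}{\Delta_{\min}}\right)^{4/7}\ \leq\ \left(\frac{n}{k}\right)^{4/7}\ =\ (n^{7/11})^{4/7}\ =\ n^{4/11}\ =\ k,
\]
so Theorem \ref{thm:main} delivers progress towards $\tO(k)$ coloring, as required. Equivalently, $k=n^{4/11}$ is the unique solution of $k=(n/k)^{4/7}$.

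There is no real obstacle beyond the balancing calculation: the heavy lifting is already done inside Theorem \ref{thm:main}, and the only thing worth checking is that the two forms of progress both fit Blum's abstract notion (low-degree peeling reduces $n$ and spends only one extra color, while Theorem \ref{thm:main} provides progress directly). Combining them gives a polynomial-time algorithm producing $\tO(n^{4/11})$ colors on any 3-colorable graph, which is the statement of Corollary \ref{cor:main}.
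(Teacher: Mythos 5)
Your proposal is correct and matches the paper's argument: the paper likewise notes that trivial progress handles vertices of degree below $k$, so one may assume $\Delta_{\min}\geq k$ and balance $k=(n/k)^{4/7}$ to get $k=\tTheta(n^{4/11})$. The only difference is that you spell out the peeling step and the balancing computation, which the paper leaves implicit.
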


\section{Integration with SDP}
We will now show how to combine our combinatorial algorithm
with the best SDP coloring of Chlamtac.

\begin{lemma}[{\cite[Corollary 16]{Chl07}}]\label{lem:chlamtac}
For any 3-colorable graph $G$ on $n$ vertices with
maximal degree $\Delta_{\max}\leq n^{0.6546}$, in polynomial
time, we can find an independent set of size
$\Omega(n/\Delta_{\max}^{0.3166})$. Hence
we can make progress towards $O(\Delta_{\max}^{0.3166})$ coloring.
\end{lemma}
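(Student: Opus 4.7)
The plan is to attack Lemma~\ref{lem:chlamtac} via a strengthened semi-definite programming relaxation of $3$-coloring, following the Lasserre-hierarchy template pioneered by Chlamtac. For each vertex $v$ and each color $c\in\{R,G,B\}$ introduce a vector $u_{v,c}$ subject to the standard ``vector $3$-coloring'' constraints: $\sum_c \|u_{v,c}\|^2 = 1$; $\langle u_{v,c},u_{v,c'}\rangle = 0$ for $c\neq c'$; and $\langle u_{v,c},u_{w,c}\rangle = 0$ whenever $vw\in E$. To go beyond the basic KMS bound of $O(\Delta_{\max}^{1/3})$ colors, I would lift this to the Lasserre (equivalently, sum-of-squares) hierarchy at some constant level $r$. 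This gives, consistently, vector representations for every partial color assignment to every $\leq r$-tuple of vertices, and in particular moment matrices whose PSD-ness encodes all local marginals up to that level.

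The rounding step proceeds by a seeded conditioning argument combined with an ARV-style geometric extraction. Pick a random ``seed'' set $T$ of $O(1)$ vertices, sample colors for them from the SDP's joint marginal distribution, and condition the Lasserre solution on this partial assignment. Standard conditioning lemmas for Lasserre give that, on average over the seed, the conditioned pairwise covariances of the remaining vectors $\{u_{v,R}\}$ shrink substantially: most pairs become nearly uncorrelated on the red coordinate. I would then apply an ARV-type structure theorem to the resulting Gram matrix of red vectors, extracting a large subset $L$ of vertices whose vectors $\{u_{v,R}\}_{v\in L}$ are pairwise well-separated (inner products below some threshold $\tau$). Because every edge forces $\langle u_{v,R},u_{w,R}\rangle = 0$, the well-separated subset $L$ behaves like an ``almost-independent'' set; a second random hyperplane (or a greedy deletion step) then cuts $L$ down to an actual independent set of size $\Omega(|L|/\Delta_{\max}^{\alpha})$, exploiting the fact that $\Delta_{\max}$ bounds how many conflict edges a single vertex can contribute.

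The hard part is not any single step but the joint parameter optimization that produces the exponents $0.3166$ and $0.6546$. One must trade off (i) the size of the conditioning seed, which controls how strongly the covariances shrink and hence how large $L$ can be made; (ii) the separation threshold $\tau$ in the ARV-style extraction, which governs both $|L|$ and the subsequent loss factor $\Delta_{\max}^{\alpha}$; and (iii) the degree regime, which must satisfy $\Delta_{\max}\leq n^{0.6546}$ precisely so that the ``correlated'' portion of the rounding stays below the ``independent'' portion coming from edge-orthogonality. Chlamtac's specific exponent $0.3166$ falls out of solving this optimization numerically for a particular choice of Lasserre level; matching it requires careful but routine calculation rather than a new idea. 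Once an independent set of the claimed size is produced, the usual ``remove-and-recurse'' reduction yields monochromatic progress towards $O(\Delta_{\max}^{0.3166})$ coloring, which is exactly the conclusion of the lemma.
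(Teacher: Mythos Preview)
The paper does not prove this lemma at all: it is quoted (with a minor typo fix) from Chlamtac~\cite[Corollary~16]{Chl07} and used purely as a black box in the SDP-combination argument. There is therefore no ``paper's own proof'' to compare your sketch against; the authors simply invoke the result.

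As a standalone reconstruction of Chlamtac's argument, your outline lands in the right general area---a Lasserre/SoS lifting of the vector $3$-coloring SDP, ARV-style geometry, and a parameter trade-off that spits out the numerical exponents---but the rounding mechanism you describe is not quite the one that actually produces $0.3166$. The ``sample a seed, condition, watch pairwise covariances shrink'' narrative is the global-correlation rounding paradigm that was developed \emph{after} 2007; Chlamtac's proof predates it and instead feeds the extra Lasserre constraints directly into a refined ARV-type region-growing / threshold-rounding analysis, extending the Arora--Chlamtac--Charikar~\cite{ACC06} framework. The specific exponent comes from how the higher-level constraints sharpen the geometric structure theorem, not from a conditioning/decorrelation budget. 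So your sketch is plausible SDP-rounding folklore, but it would not, as written, reproduce the cited bound, and it does not match the actual mechanism in~\cite{Chl07}. For the purposes of \emph{this} paper, though, none of that matters: the lemma is imported wholesale, and any attempted proof here is superfluous.
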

The above statement fixes a small typo in the statement of
\cite[Corollary 16]{Chl07} which says
that the maximal degree should be below $\Delta_{\max}=n^{0.6546}$,
as if we wouldn't benefit from a smaller $\Delta_{\max}$.

We note that combination with Theorem \ref{thm:main} would be trivial if
the $\Delta_{\min}$ in Theorem \ref{thm:main} was equal to the $\Delta_{\max}$ in
Lemma \ref{lem:chlamtac}. Things are not that simple, but
Karger and Blum \cite{BK97} have already shown how to combine
the original SDP of Karger et al. \cite{KMS98}
and Blum's algorithm \cite{Blum94}.

To describe the combination in our case, we first need to elaborate
a bit on Blum's progress from Section\tref{sec:blum}. We already
mentioned monochromatic progress where we identify a set
of vertices that we know are monochromatic in some coloring.

Less immediate types of progress work as follows. We
start with a graph $G=G_0$. We have
a constant number of players $i=1,...,\ell=O(1)$ that can announce progress
towards $\tO(k)$ coloring. Each player $i$ starts with
an empty vertex set $V_i$. When player $i$ announces
progress, he removes some vertices from $G$ and place
them in his set $V_i$. His promise is that if he ends up with
$n_i=|V_i|\geq n/(2\ell)=\Omega(n)$ vertices, then he
can find an independent set $I_i$ of size $n_i/\tO(k)$ in
the subgraph $G_0|V_i$ of $G_0$ induced by $V_i$.

The players play until $G$ has lost half its vertices. This means that
we only play on $G$ when $G$ has at least $n/2$ vertices left. When
they play, we always need someone to claim progress. Combined the players
end up taking more than $n/2$ vertices, so some
player $i$ ends up with $n_i\geq n/(2\ell)=\Omega(n)$ vertices. His
independent set $I_i$ is of size $n_i/\tO(k)=n/\tO(k)$, and $I_i$ is
also independent in $G_0$.  We can therefore use a single color on
$I_i$ and recurse on $G_0\setminus I_i$.

The formal proof that the game leads to a $\tO(k)$ coloring, including
the special monochromatic player, is described in \cite{Blum94}.

We now introduce an SDP player $s$ that for some desired $\Delta$
claims progress if there is any vertex $v$ of degree below $\Delta$,
and moves $v$ to his set $V_s$. The SDP player
will end up with an induced subgraph $G_s=G|V_s$ of $G$ where
the average degree is below $2\Delta$. We delete all
vertices with degree at least $4\Delta$. The
resulting vertex set $V_s'$ has $|V_s'|> n_s/2$ and
$G_s'=G|V_s'$ has maximum degree
$\Delta_{\max}<4\Delta$. If $n_s=|V_s|=\Omega(n)$ and $\Delta=o(n^{0.6546})$,
he can apply Lemma \ref{lem:chlamtac}
to $G_s'$ and get an independent set of size 
$\Omega(n_s'/\Delta_{\max}^{0.3166})=\Omega(n/\Delta^{0.3166})$. The
SDP player thus follows the rule for progress towards $O(\Delta^{0.3166})$
coloring.

Since it for any graph suffices that some player can make progress, a player
like our combinatorial algorithm in Theorem\tref{thm:main} can
wait for the SDP player to remove all the vertices of degree at most
$\Delta$. We therefore only apply Theorem\tref{thm:main} to
graphs with minimum degree $\Delta_{\min}\geq\Delta$.
With Theorem \ref{thm:main}, we then make progress towards
$\tO\left((n/\Delta_{min})^{4/7}\right)=\tO\left((n/\Delta)^{4/7}\right)$ coloring in polynomial time.

Balancing $(n/\Delta)^{4/7}=\Delta^{0.3166}$, we get $\Delta= n^{0.6435}=o(n^{0.6546})$, and conclude
\begin{theorem}\label{thm:sdp}
For any 3-colorable graph $G$,
there is a polynomial time algorithm to make progress towards an $O(n^{0.2038})$-coloring.
\end{theorem}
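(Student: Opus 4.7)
The plan is to instantiate Blum's player framework described immediately above the theorem with exactly two non-trivial players: an SDP player $s$ that strips away low-degree vertices, and a combinatorial player $c$ that invokes Theorem \ref{thm:main} on the residue (plus the implicit monochromatic player inherited from Blum). Fix a threshold $\Delta$ to be optimized at the end. Whenever the current graph $G$ contains a vertex $v$ of degree less than $\Delta$, player $s$ claims $v$ and adds it to $V_s$; otherwise player $s$ is idle, and player $c$ acts on $G$.

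First I would verify the SDP player meets his promise. If he ever reaches $n_s = |V_s| = \Omega(n)$, then the induced subgraph $G|V_s$ has average degree less than $2\Delta$. Discarding vertices of degree at least $4\Delta$ deletes at most half of $V_s$ and yields a subgraph $G_s'$ with $|V_s'| = \Omega(n)$ and maximum degree less than $4\Delta$. Provided $\Delta = o(n^{0.6546})$, Lemma \ref{lem:chlamtac} applied to $G_s'$ produces an independent set of size $\Omega(|V_s'|/\Delta^{0.3166}) = \Omega(n/\Delta^{0.3166})$, so player $s$ fulfils the promise of progress towards $O(\Delta^{0.3166})$ coloring. Meanwhile, whenever the combinatorial player acts, every vertex of the current graph has degree at least $\Delta$, so $\Delta_{\min}\geq \Delta$ and Theorem \ref{thm:main} gives progress towards $\tO((n/\Delta)^{4/7})$ coloring.

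Now I balance. Since both progress targets must hold simultaneously under the same color budget $k$, I set $\Delta^{0.3166} = (n/\Delta)^{4/7}$ (up to polylog factors absorbed by $\tO$). Solving $(4/7 + 0.3166)\log \Delta = (4/7)\log n$ yields $\Delta = n^{0.6435}$. Crucially, $0.6435 < 0.6546$, so the hypothesis of Lemma \ref{lem:chlamtac} is satisfied with room to spare; this is the one arithmetic check that could conceivably have killed the combination, and it is the main thing I would worry about. With this choice, the common coloring bound is $\Delta^{0.3166} = n^{0.6435 \cdot 0.3166} = O(n^{0.2038})$.

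Finally, I would invoke the game argument from Blum reproduced above: we run the two (plus monochromatic) players until $G$ loses half its vertices; together they must have absorbed more than $n/2$ vertices, so some player $i$ has $n_i = \Omega(n)$; that player's independent set has size $n_i / \tO(k) = n/\tO(n^{0.2038})$ and is independent in the original $G_0$, letting us color it with one new color and recurse. The recursion on $G_0$ minus this independent set delivers an $O(n^{0.2038})$ coloring in polynomial time, establishing Theorem \ref{thm:sdp}.
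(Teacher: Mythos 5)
Your proposal is correct and matches the paper's own proof essentially step for step: the same SDP player stripping vertices of degree below $\Delta$, the same Markov-style pruning to get maximum degree $O(\Delta)$ before invoking Lemma \ref{lem:chlamtac}, the same application of Theorem \ref{thm:main} to the high-minimum-degree residue, and the same balancing $\Delta^{0.3166}=(n/\Delta)^{4/7}$ yielding $\Delta=n^{0.6435}=o(n^{0.6546})$. Nothing further is needed.
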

Hence we get
\begin{corollary}
A 3-colorable graph on $n$ vertices can be colored
with $O(n^{0.2038})$ colors in polynomial time.
\end{corollary}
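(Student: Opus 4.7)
The plan is to derive the corollary from Theorem \ref{thm:sdp} by the standard iteration built into Blum's progress framework and used explicitly by Blum and Karger \cite{BK97}. Given a 3-colorable graph $G$ on $n$ vertices, I would invoke the multi-player game described just before Theorem \ref{thm:sdp}: by that theorem, at every step some player can make polynomial-time progress towards $O(n^{0.2038})$-coloring, so the game terminates once $G$ has lost half its vertices. By pigeonhole over the constantly many players, one of them ends up owning $\Omega(n)$ vertices, and fulfilling his promise he produces an independent set $I \subseteq V(G)$ of size $\Omega(n / n^{0.2038}) = \Omega(n^{0.7962})$ in the original graph.

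Next, I would assign a single fresh color to all of $I$, delete $I$ from $G$, and recurse on $G \setminus I$, which is still 3-colorable. Each iteration spends one new color while deleting an $\Omega(n^{-0.2038})$ fraction of the current vertex set, so a straightforward telescoping argument shows that the process terminates after $O(n^{0.2038})$ iterations. Each iteration runs in polynomial time, so does the whole procedure, and the number of distinct colors used is $O(n^{0.2038})$.

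The only subtlety worth highlighting is bookkeeping the polylogarithmic factors. The combinatorial branch (Theorem \ref{thm:main}) delivers only an $\tO(\cdot)$ bound, but in the balance carried out in the proof of Theorem \ref{thm:sdp} these $\operatorname{polylog}(n)$ factors are already absorbed into the slack between the exact balance point $\Delta^{0.3166}$ at $\Delta = n^{0.6435\ldots}$ and the rounded exponent $0.2038$. Hence the final count is genuinely $O(n^{0.2038})$ rather than $\tO(n^{0.2038})$, and the corollary follows. I do not expect any real obstacle here, since the reduction from ``progress'' to an actual coloring is precisely what Blum's framework was designed to automate.
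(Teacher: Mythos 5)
Your proposal is correct and matches the paper's (implicit) argument: the corollary is exactly the statement that Theorem \ref{thm:sdp}'s ``progress'' guarantee, fed through Blum's progress-to-coloring framework (extract an independent set of size $\Omega(n/k)$, spend one color, recurse, and sum the geometric series over halvings of the vertex set), yields an $O(n^{0.2038})$-coloring. The paper simply writes ``Hence we get,'' delegating this standard reduction to \cite{Blum94}, so you have just spelled out the same route in more detail.
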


\bibliographystyle{abbrv} \bibliography{paper}

\end{document}